\newtheorem{theorem}{Theorem}[section]
\newtheorem{lemma}[theorem]{Lemma}
\newtheorem{proposition}[theorem]{Proposition}
\theoremstyle{definition}
\newtheorem{example}{Example}[section] 
\newtheorem{mydef}{Definition}[section]
\newtheorem{conj}{Conjecture}[section]
\theoremstyle{remark}
\newtheorem{remark}{Remark}
\newcommand{\ml}{\left[ \begin{array}{cccccccccccccccccccccccccc}}
\newcommand{\mr}{\end{array} \right]}
\newcommand{\mll}{\left[ \begin{array}{c|c|c|c|c}}
\newcommand{\yf}{\operatorname{fvct}}
\newcommand{\yn}{y^+}
\newcommand{\until}[1]{\{1,\dots, #1\}}
\newcommand\oprocendsymbol{\hbox{$\square$}}
\newcommand\oprocend{\relax\ifmmode\else\unskip\hfill\fi\oprocendsymbol}
\newcommand{\real}{\mathbb{R}}
\begin{document}
\title{On Opinion Dynamics in Heterogeneous Networks\thanks{This work was
    supported in part by the UCSB Institute for Collaborative Biotechnology
    through grant DAAD19-03-D004 from the U.S. Army Research Office.}}

\author{Anahita~Mirtabatabaei \hspace{0.5in} Francesco~Bullo\\
  \thanks{Anahita Mirtabatabaei and Francesco Bullo are with the Center for
    Control, Dynamical Systems, and Computation, University of California,
    Santa Barbara, Santa Barbara, CA 93106, USA,\tt{
      \{mirtabatabaei,bullo\}@engineering.ucsb.edu}}} \date{\today}
\maketitle

\begin{abstract}
  This paper studies the opinion dynamics model recently introduced by Hegselmann and Krause: each agent in a group maintains a real number describing its opinion; and each agent updates its opinion by averaging all other opinions that are within some given confidence range. The confidence ranges are distinct for each agent. This heterogeneity and state-dependent topology leads to poorly-understood complex dynamic behavior. We classify the agents via their interconnection topology and, accordingly, compute the equilibria of the system. We conjecture that any trajectory of this model eventually converges to a steady state under fixed topology. To establish this conjecture,  we derive two novel sufficient conditions: both conditions guarantee convergence and constant topology for infinite time, while one condition also guarantees monotonicity of the convergence. In the evolution under fixed topology for infinite time, we define leader groups that determine the followers' rate and direction of convergence. 
\end{abstract} 
\section{Introduction}
``Any social behavior can be viewed both as independent and dependent, as cause and effect'' \cite{FH:59}.
In a society, the impacts of individuals opinions on each other form a network, and as the time progresses, the opinions change as a function of such network's structure. Much research is done on how the topological properties of the interconnection network can effect final decisions. 
The study of opinion dynamics and social networks goes back to  J.R.P. French \cite{JF:56}.
French's Formal Theory of Social Power explores the patterns of interpersonal relations and agreements that can explain the influence process in groups of agents. Later F. Harary provided a necessary and sufficient condition to reach a consensus in French's model of power networks \cite{FH:59}.
 Besides sociology, opinion dynamics is also of interest in politics, as in voting prediction \cite{EBN:05}; physics, as in spinning particles \cite{EN-PK-FV-SR:03}; cultural studies, as in language change \cite{JT-MT-AN:07}; and economics, as in price change \cite{KS-RW:02}. 

An important step in modeling agents in economics has been switching from perfectly rational agents approach to a bounded rational, heterogeneous agents using rule of thumb strategy under bounded confidence. There is no trade in a world where all agents are perfectly rational, which is in contrast with the high daily trading volume. Having bounded confidence in a society, which accounts for homophily, means that an agent only interacts with those whose opinion is close to its own. Mathematical models of opinion dynamics under bounded confidence have been presented independently by: Hegselmann and Krause (HK model), where agents synchronously update their
opinions by averaging all opinions in their confidence bound \cite{RH-UK:02}; and by Deffuant and Weisbuch and others (DW model), where 
agents follow a pairwise-sequential updating procedure in averaging \cite{GW-GD-FA-JN:02}. 
HK and DW models can be classified into either agent-based and density-based models, if the number of agents is finite or
infinite, respectively; or into heterogeneous and homogeneous models, if the confidence bounds are uniform or agent-dependent, respectively.
The convergence of both agent- and density-based homogeneous HK models are
discussed in \cite{VDB-JMH-JNT:09}. 
 The agent-based homogeneous HK system
is proved to reach a fixed state in finite time \cite{JD:01}, the time
complexity of this convergence is discussed in \cite{SM-FB-JC-EF:05m}, its
stabilization theorem is given in \cite{JL:05a}, and its rate of
convergence to a global consensus is studied in \cite{AO-JNT:07}.  
The heterogeneous HK model is studied by Lorenz, who reformulated the HK
dynamics as an interactive Markov chain \cite{JL:06b} and analyzed the
effects of heterogeneous confidence bounds \cite{JL:09}. %
In this paper, we focus on discrete-time agent-based heterogeneous HK (htHK) model of opinion dynamics, whose dynamics is considerably more complex than the homogeneous case. The convergence of this model is
experimentally observed, but its proof is still an open problem.

As a first contribution, based on extensive numerical evidence, we conjecture that there exists a finite time along any htHK trajectory, after which the topology of the interconnection network remains unchanged, and hence the trajectory converges to a steady state. We partly prove our conjecture: (1) We design a classification of agents in the htHK system, which is a function of state-dependent interconnection topology;
 (2) We introduce the new notion of \emph{final value at constant topology}, characterize its properties, including required condition for this value to be an equilibrium vector, and formulate the map under which final value at constant topology is an image of current opinion vector; (3)  For each equilibrium opinion vector, we define its
\emph{equi-topology neighborhood} and \emph{invariant equi-topology
  neighborhood}. We show that if a trajectory enters the invariant
equi-topology neighborhood of an equilibrium, then it remains
confined to its equi-topology neighborhood and sustains an interconnection
topology equal to that of the equilibrium. This fact establishes a
novel and simple sufficient condition under which the trajectory converges to a steady state, and the topology of the interconnection
network remains unchanged.
(4) We define a rate of convergence as a function of final value at constant topology. Based on the direction of convergence and the defined rate, we derive a sufficient condition under which the trajectory \emph{monotonically} converges to a steady state, and the topology of the interconnection network remains unchanged. 
(5) We explore some interesting behavior of classes of agents 
when they update their opinions under fixed interconnection topology for infinite time, for instance, the existence of a leader group for each agent that determines the follower's rate and direction of convergence.

This paper is organized as follows. The mathematical model, conjecture, agents classification, and equilibria are discussed in Section~\ref{model}. The two sufficient conditions for convergence and the analysis of the evolution under fixed topology are presented 
in Section~\ref{sufficient_cond}. Conclusion and future work are given in Section~\ref{conclusion}. Finally, the Appendix contains some proofs. 


\section{Heterogeneous HK Model}\label{model}
Given the confidence bounds $r=\{r_1,\dots,r_n\}\in\real^n_{>0}$, we
associate to each opinion vector $x(t)= y \in\real^n$ the \emph{proximity digraph}
$G_r(y)$ with nodes $\until{n}$ and edge set defined as follows: the
set of out-neighbors of node $i$ is $\mathcal{N}_i (y) =
\{ j\in\until{n} : |y_i - y_j| \le r_i \}$.
The heterogeneous HK model of opinion dynamics updates $x(t)$ according to
\begin{equation} \label{system}
  x(t + 1)  =   A(x(t)) x(t),
\end{equation}
where, denoting the cardinality of $\mathcal{N}_i (y)$ by $|\mathcal{N}_i(y)|$, the $i,j$ entry of $A(x(t) = y)$ is defined by
\begin{align*}
  a_{ij} (y) & = 
  \begin{cases} 
   \frac{1}{\displaystyle |\mathcal{N}_i(y)|},  & \mbox{if } j \in \mathcal{N}_i(y),\\ 
    0, & \mbox{if } j \notin \mathcal{N}_i(y).
  \end{cases}
\end{align*}

\begin{conj}[Constant-topology in finite time]\label{conjtau}
It is conjectured that along every trajectory in an htHK system~\eqref{system}, there exists a finite time $\tau$ after which the state-dependent interconnection
  topology remains constant or, equivalently, $G_r(x(t))=G_r(x(\tau))$ for all $t \ge \tau$. 
\end{conj}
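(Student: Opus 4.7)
The plan is to argue by contradiction using a pigeonhole/Lyapunov strategy, while being candid that this conjecture appears genuinely difficult and that a full proof will likely require more than the sketch below. Suppose along some trajectory the digraph $G_r(x(t))$ changes at an infinite set of times $t_1 < t_2 < \cdots$. Since there are only finitely many digraphs on $\until{n}$, some specific edge $(i,j)$ must toggle between present and absent infinitely often, which forces $|x_i(t)-x_j(t)|$ to cross the threshold $r_i$ (or $r_j$) infinitely often. The goal is to derive an impossibility from this.

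First I would collect the standard invariants of $A(y)$-averaging. Because each $A(y)$ is row-stochastic, the function $t\mapsto \max_k x_k(t)$ is non-increasing and $t\mapsto \min_k x_k(t)$ is non-decreasing, so the trajectory lies in a compact set and has accumulation points. Moreover, on any time interval over which $G_r(x(t))$ happens to be constant, \eqref{system} reduces to a linear, row-stochastic recursion $x(t+1)=Ax(t)$, and the disagreements within each strongly connected component decay geometrically by standard consensus arguments. This gives clean control in the ``easy'' phases between topology changes.

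The heart of the proof attempt is to find a Lyapunov-like functional that (a) is non-increasing along~\eqref{system} under fixed topology and (b) drops by a uniformly positive amount at each topology change, forcing only finitely many such changes. A natural candidate is
\[
V(x) \;=\; \sum_{i=1}^{n} \frac{1}{|\mathcal{N}_i(x)|}\sum_{j\in\mathcal{N}_i(x)}(x_i-x_j)^2,
\]
or a weighted variant tailored to the non-symmetric adjacency. Under constant topology, $V$ should decrease by a standard averaging argument; the delicate step is to bound the jump in $V$ caused by the addition or deletion of an edge, and then to show these jumps cannot sum to a finite quantity if infinitely many occur.

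The main obstacle -- and the reason this is a conjecture rather than a theorem -- is the failure of order preservation in the heterogeneous model. In homogeneous HK the ordering $x_1(t)\le\cdots\le x_n(t)$ is preserved, so once a gap $x_{k+1}(t)-x_k(t)$ exceeds the common confidence radius it can never shrink again, and finite-time stabilization follows. Here, an agent with a large $r_i$ can drag two small-radius agents toward each other and re-create an edge that had been severed; equivalently, a component of $G_r(x(t))$ can, a priori, split and merge repeatedly. Consequently $|x_i(t)-x_j(t)|$ may in principle oscillate across $r_i$ without any monotone quantity obstructing it, and the Lyapunov bound above is not obviously strict across edge changes. I would expect a complete proof to require a much finer analysis -- probably invoking the agent classification and the \emph{final value at constant topology} machinery the authors develop in Section~\ref{sufficient_cond}, applied along an accumulation point of the trajectory -- and it is precisely at this step that my proof plan, and apparently every known attack, stalls.
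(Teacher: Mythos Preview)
The paper does not prove this statement. Conjecture~\ref{conjtau} is presented precisely as a conjecture: the authors support it only with numerical evidence (referencing \cite[Section~5]{AM-FB:11f}) and state in Section~\ref{conclusion} that ``the main future challenge is to prove the eventual convergence of all htHK systems.'' There is therefore no paper proof for your proposal to be compared against.

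Your sketch is a reasonable first attack and you are honest about where it fails. The monotonicity of $\max_k x_k(t)$ and $\min_k x_k(t)$ is correct, and the pigeonhole reduction to a single edge $(i,j)$ toggling infinitely often is sound. But the Lyapunov candidate $V$ you write down is the genuine gap: because $G_r(x)$ is a \emph{digraph} (the neighbor relation is not symmetric when the $r_i$ differ), the averaging map $x\mapsto A(x)x$ is not a contraction in any obvious quadratic energy, and there is no evident reason why $V$ should be non-increasing even under \emph{fixed} topology, let alone drop by a uniform amount at each edge change. You flag exactly the right obstruction---loss of order preservation permits repeated splitting and merging of components---and this is why the paper does not attempt a direct proof. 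Instead, the authors' strategy is to establish \emph{sufficient} conditions (Theorems~\ref{epsilonthm} and~\ref{constanttopology}) under which the topology is guaranteed to freeze, leaving the general case open. Your closing paragraph accurately anticipates this: the machinery of Section~\ref{sufficient_cond} is the paper's partial substitute for the Lyapunov argument you could not close.
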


This conjecture is supported by the extensive numerical results presented in \cite[Section 5]{AM-FB:11f}.
Here, let us quote some relevant definitions from the graph theory, e.g. see \cite{FB-JC-SM:09}. In a digraph, if there exists a directed path from node $i$ to node $j$, then $i$ is a \emph{predecessor} of $j$, and $j$ is a \emph{successor} of $i$.
A node of a digraph is \emph{globally reachable} if it can be reached from any 
other node by traversing a directed path. A digraph is \emph{strongly connected}
if every node is globally reachable. A digraph is \emph{weakly connected} if replacing all of its directed edges with undirected edges produces a connected undirected graph. 
A maximal subgraph which is strongly or weakly connected forms a \emph{strongly connected component} (SCC) or a \emph{weakly connected component} (WCC), respectively. Every digraph $G$ can be decomposed into either its SCC's or its WCC's. Accordingly, the \emph{condensation digraph} of $G$, denoted $C (G)$, can be defined as follows: the nodes of $C (G)$ are the SCC's of $G$, and there exists a directed edge in $C (G)$ from node $S_1$ to node $S_2$ if and only if there exists a directed edge in $G$ from a node of $S_1$ to a node of $S_2$. A node with out-degree zero is named a
\emph{sink}. Knowing that the condensation digraphs are acyclic, each WCC in $C(G)$ is acyclic and thus has at least one sink.

\subsection{Agents Classification} \label{class}

We classify the agents in an htHK system~\eqref{system} based on their interaction topology at each time step. 
For any opinion vector $y \in \real^n$, the components of $G_r(y)$ can be classified into three classes. 
A \emph{closed-minded component} is a complete subgraph and an SCC of $G_r(y)$ that is a sink in $C(G_r(y))$. 
A \emph{moderate-minded component}  is a non-complete subgraph and an SCC of $G_r(y)$ that is a sink in $C(G_r(y))$. 
The rest of the SCC's in $G_r(y)$ are called \emph{open-minded SCC's}. 
We define \emph{open-minded subgraph} to be the remaining subgraph of $G_r(y)$ after removing its closed and moderate-minded components and their edges. A WCC of the open-minded subgraph is  called an \emph{open-minded WCC}, which is composed of one or more open-minded SCC's. The evolution and initial proximity digraph  of an htHK system are illustrated in Figures \ref{figevolution} and \ref{proximity}.

Since $C(G_r(y))$ is an acyclic digraph, in an appropriate ordering, its adjacency matrix is lower-triangular \cite{FB-JC-SM:09}. Consequently, the adjacency matrix of $G_r(y)$ is lower block triangular in such ordering.
Following the classification of SCC's in $G_r(y)$, we put $A(y)$ into \emph{canonical form} $\overline{A}(y)$, by an appropriate
 \emph{canonical permutation matrix} $P(y)$, 
\begin{equation*}
\overline{A}(y) = P(y) A(y) P^T(y) = \ml C(y) & 0 & 0\\0 & M(y) & 0\\  \Theta_C(y) & \Theta_M(y)&  \Theta(y) \mr.
\end{equation*}
The submatrices $C(y)$, $M(y)$, and $\Theta(y)$ are the adjacency matrices of the closed, moderate, and open-minded subgraphs respectively, and are block diagonal matrices. Each entry in $\Theta_C(y) $ or $ \Theta_M(y)$ represents an edge from an open-minded node to a closed or moderate-minded node, respectively. The adjacency matrix $A(y)$  is a non-negative row stochastic matrix, and its nonzero diagonal establishes its aperiodicity.

\begin{remark}
Previously, (Lorenz, 2006) classified the agents of the htHK systems into two classes named \emph{essential} and \emph{inessential}. An agent is essential if any of its successors is also a predecessor, and the rest of agents are inessential \cite{JL:06}. 
It is easy to see that closed and moderate-minded components are essential, and open-minded components are inessential. 
\end{remark}
\begin{figure}
 \centering
    \includegraphics[width=3in,keepaspectratio]{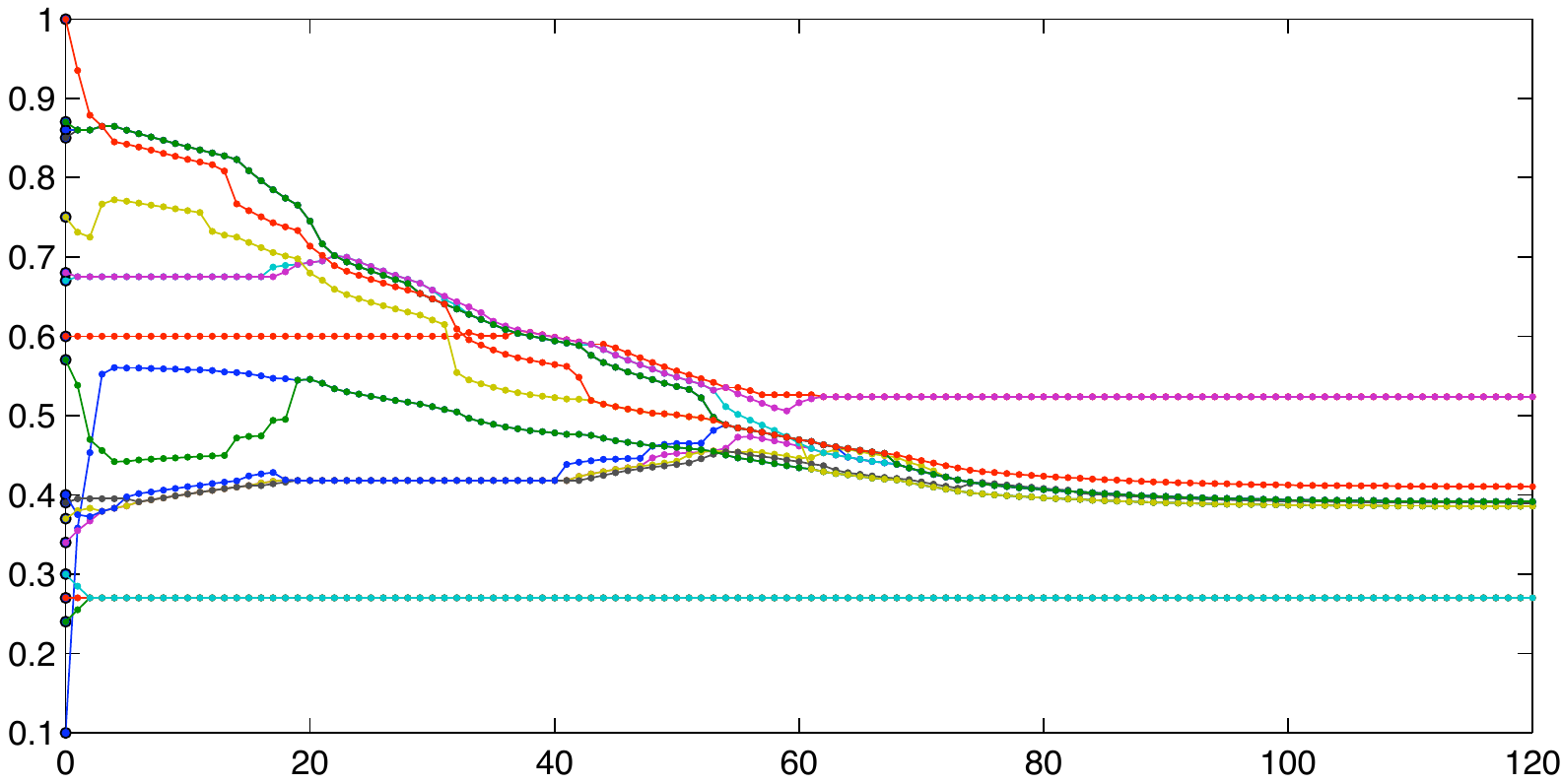}
 \caption{An htHK system evolution with initial opinion vector $x(0)=$ [0.1  0.24  0.27 0.3   0.34  0.37  0.39   0.4   0.5     0.6  0.67  0.68    0.75    0.85  0.86  0.87   1 ]$^T$, and bounds of confidence $r=$[0.5   0.04   0.04   0.04   0.031    0.021   0.011   0.061      0.25     0.01    0.04   0.03    0.3    0.07   0.07   0.07 0.135$]^T$. The interconnection topology of agents remains unchanged after $t=74$, see Conjecture~\ref{conjtau}.}\label{figevolution}
\end{figure} 
\begin{figure}
  \centering
  \includegraphics[width=3.4in,keepaspectratio]{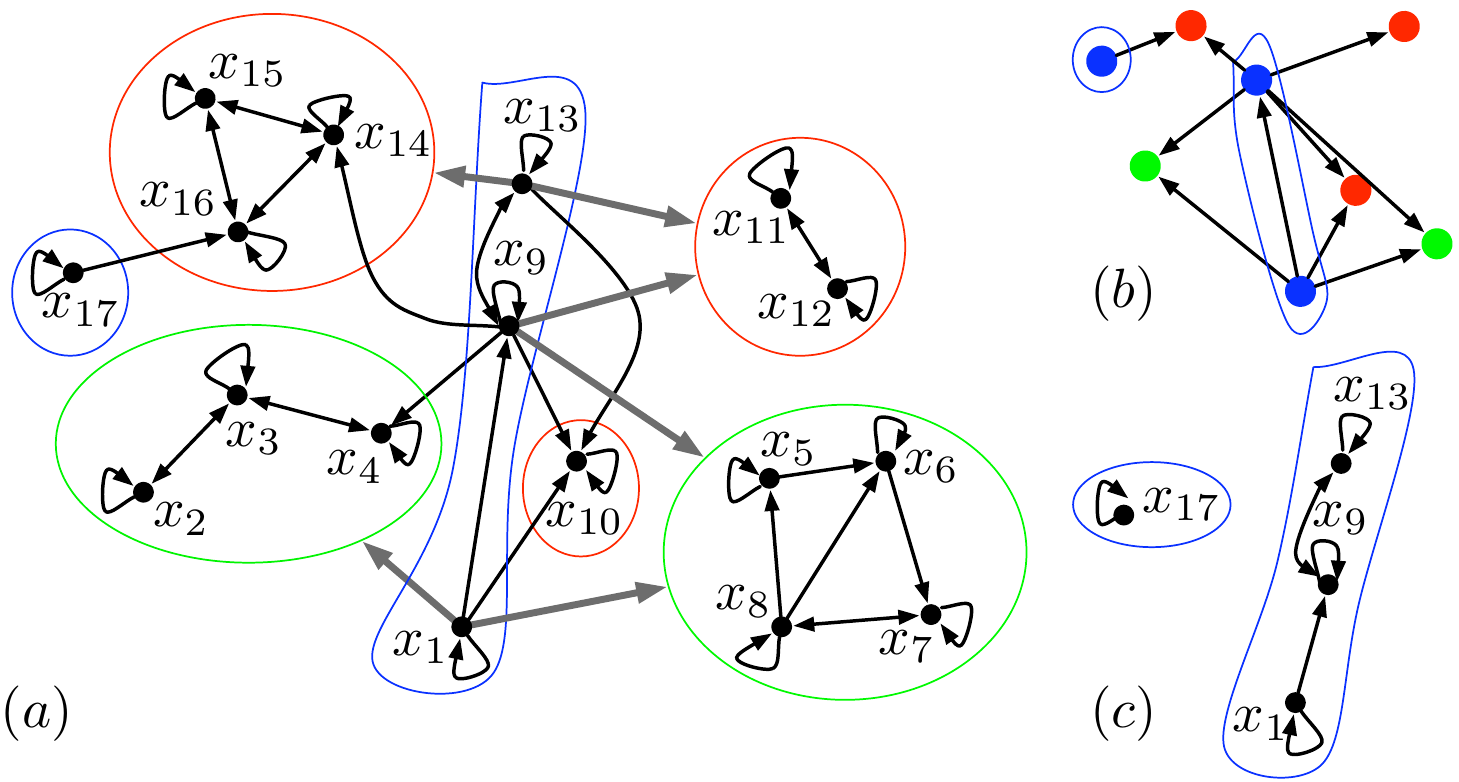}
  \caption{: (a) shows the proximity digraph $G_r(x(0))$ with its closed (red), moderate (green), and open-minded components (blue), and each thick gray edge represents multiple edges to all agents in one component; (b) is the condensation digraph $C(G_r(x(0)))$; and (c) is the open-minded subgraph.}\label{proximity} 
\end{figure}

\subsection{Final Value at Constant Topology}\label{final_value}
Based on Conjecture~\ref{conjtau}, for any opinion vector $y\in\real^n$ we define its \emph{final value at constant topology} $  \yf : \real^n \rightarrow \real^n$ to be $\yf(y) = \lim_{t\to\infty}A(y)^ty.$
Hence, if the interconnection topology of the system with initial opinion vector $y$ remains unchanged for infinite time, the opinion vector converges to $\yf(y)$.
Next, an opinion vector $y_0\in \real^n$ is an \emph{equilibrium} of the htHK system
if and only if $y_0$ is an eigenvector of the adjacency matrix $A(y_0)$ for eigenvalue one or, equivalently, $y_0 = A(y_0) y_0.$
The set of final values at constant topology is a superset of the equilibria. Clearly, if $y_0$ is an equilibrium, then its final value at constant
topology is equal to itself, that is, $\yf(y_0)=y_0$. The condition under which a final value at constant topology is an equilibrium of the system is discussed in Proposition~\ref{prop:properties-final-value}, for a proof of which see \cite{AM-FB:11f}.

\begin{proposition}[Properties of the final value at constant topology]
  \label{prop:properties-final-value}
  Consider opinion vector $y\in\real^n$:
  \begin{enumerate}
  \item $\yf(y)$ is well defined, and is equal to
    \begin{multline*}
      \yf(y) = P^T(y)  \\
      \ml C & 0 & 0\\ 
      0 & M^*& 0 \\
      (I - \Theta)^{-1}\Theta_CC  & (I - \Theta)^{-1}\Theta_MM^* & 0\mr(y) P(y) y, 
    \end{multline*}
    where $P(y)$ is the canonical permutation matrix,
    and $M^*(y) = \lim_{t\rightarrow \infty} M(y)^t$ is well defined.
  \item If the two networks of agents with opinion vectors $y$ and $\yf(y)$ have the same interconnection topology, or equivalently, $G_r(y) =G_r(\yf(y))$, then \label{nomoderate}
\begin{enumerate}
\item\label{iiparta} $\yf(y)$ is an equilibrium vector, 
\item\label{iipartb} $G_r(y)$ contains no moderate-minded component, and
\item\label{midclosed} in any  WCC of $G_r(\yf(y))$, the maximum and the minimum opinions $\yf_i(y)$ belong to that WCC's closed-minded components.
\end{enumerate}
  \end{enumerate}
\end{proposition}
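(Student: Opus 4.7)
The plan is to exploit the lower block triangular structure of $\overline{A}(y)$ and analyze each diagonal block separately. For part (i), the key observations will be: each closed-minded block of $C(y)$ has the form $\frac{1}{k}\mathbf{1}\mathbf{1}^T$ (a complete subgraph with uniform averaging), hence is idempotent, so $C^t=C$ for all $t\geq1$; each moderate-minded block of $M(y)$ is the transition matrix of an irreducible (strong connectedness of the SCC) and aperiodic (nonzero diagonal) Markov chain, so by Perron--Frobenius $M^t\to M^*$ with each block of $M^*$ rank-one and rows equal to the stationary distribution of that SCC; and $\Theta(y)$ has spectral radius strictly less than one, because every open-minded SCC has a directed path in the condensation to a sink (which by definition is not open-minded), so probability mass eventually leaves the open-minded subgraph, giving $\|\Theta^K\|_\infty<1$ for some $K$ and hence $\Theta^t\to 0$. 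Expanding $\overline{A}^t$ by block multiplication yields off-diagonal lower-block entries
\[\sum_{k=0}^{t-1}\Theta^k\Theta_C C^{t-1-k}\quad\text{and}\quad\sum_{k=0}^{t-1}\Theta^k\Theta_M M^{t-1-k},\]
whose limits I would handle using $C^s=C$ for $s\geq1$ together with the geometric series $\sum_{k\geq0}\Theta^k=(I-\Theta)^{-1}$, and by splitting $M^{t-1-k}=M^*+(M^{t-1-k}-M^*)$ with the difference decaying geometrically and the outer $\Theta^k$ factors summable. This yields the claimed closed form $\yf(y)=P^T(y)\overline{A}^\infty P(y)\,y$.

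For part (ii)(a), the definition of $\yf(y)$ immediately gives $A(y)\yf(y)=\lim_t A(y)^{t+1}y=\yf(y)$, so $\yf(y)$ is a fixed point of $A(y)$; the hypothesis $G_r(y)=G_r(\yf(y))$ gives $A(y)=A(\yf(y))$, and hence $\yf(y)=A(\yf(y))\yf(y)$ is an equilibrium. For (ii)(b), I would argue by contradiction: if $S$ were a moderate-minded component of $G_r(y)$, then by part (i) all entries of $\yf(y)$ indexed by $S$ share a single common value (the stationary mean for $S$'s block of $M^*$); but any two equal opinions lie within each other's confidence bound, so $S$ would induce a complete subgraph in $G_r(\yf(y))$, contradicting $G_r(y)=G_r(\yf(y))$ in which $S$ is non-complete by definition of moderate-minded.

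Part (ii)(c) I would prove as a maximum-principle argument. Pick a WCC $W$ of $G_r(\yf(y))$ and let $i^*\in W$ attain the maximum of $\yf_i(y)$ over $W$. By (ii)(a), $\yf_{i^*}(y)$ equals the uniform average of $\yf_j(y)$ over $j\in\mathcal{N}_{i^*}(y)$, and all these successors lie in $W$; since every term is at most the maximum, every out-neighbor of $i^*$ also attains the maximum. Iterating along directed edges shows that every node in the SCC of $i^*$ attains the maximum; if that SCC were open-minded it would not be a sink in the condensation, so by the same averaging argument the maximum would propagate through its exit edges to successor SCCs in $W$. Since the condensation restricted to $W$ is finite and acyclic, this propagation reaches a sink SCC of $W$, which by (ii)(b) must be closed-minded. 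The minimum case is symmetric.

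The main obstacle will be part (i): establishing $\rho(\Theta)<1$ via the condensation structure and carefully justifying the interchange of limits in the middle off-diagonal block; parts (ii)(a)--(c) then follow by fixed-point and maximum-principle arguments standard in consensus theory. The key conceptual input throughout is that the acyclicity of the condensation forces the open-minded block to be transient and forces the extremes within each WCC to be absorbed into its closed-minded sinks.
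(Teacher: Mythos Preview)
The paper does not actually prove this proposition; immediately after stating it, the authors write ``for a proof of which see \cite{AM-FB:11f}'' and move on. So there is no in-paper argument to compare against.

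That said, your proposal is correct and is almost certainly the intended argument. Part (i) is exactly the standard computation for a lower block-triangular stochastic matrix: idempotence of each $\tfrac{1}{k}\mathbf{1}\mathbf{1}^T$ block gives $C^t=C$; irreducibility plus aperiodicity of each moderate block gives $M^t\to M^*$ with rank-one blocks; and the fact that every open-minded SCC has a directed path to a sink in the condensation makes $\Theta$ strictly substochastic in the sense $\|\Theta^K\|_\infty<1$ for some $K$, hence $\rho(\Theta)<1$ and $(I-\Theta)^{-1}$ exists. Your treatment of the off-diagonal limits (geometric series for the $C$-column, and the split $M^{t-1-k}=M^*+(M^{t-1-k}-M^*)$ for the $M$-column) is the right way to handle the double limit. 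For (ii)(a), the one-line fixed-point argument is exactly what is needed. For (ii)(b), your observation that a moderate-minded SCC reaches consensus under $M^*$ and therefore induces a complete subgraph in $G_r(\operatorname{fvct}(y))$, contradicting non-completeness in $G_r(y)$, is the clean contradiction. For (ii)(c), the maximum-principle propagation along out-edges down to a sink of the (acyclic) condensation, combined with (ii)(b) to rule out moderate sinks, is precisely the mechanism the paper later relies on (see the use of part~\ref{midclosed} in the proof of Theorem~\ref{constanttopology}, case~2).

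One small point worth making explicit in your write-up of (i): in the sum $\sum_{k=0}^{t-1}\Theta^k\Theta_C C^{t-1-k}$ the term $k=t-1$ contributes $\Theta^{t-1}\Theta_C$ (with $C^0=I$ rather than $C$), but this vanishes in the limit since $\Theta^{t-1}\to 0$, so the stated closed form is unaffected.
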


\section{Convergence of htHK Systems}\label{sufficient_cond}
In this section, we present two sufficient conditions for htHK trajectories that guarantees fixed interconnection topology for infinite time and consequently convergence to a steady state. The second sufficient condition is more complicated and often more restrictive than the first condition, since it also guarantees the \emph{monotonicity} of the convergence.
 Moreover, there are examples in which the first condition does not hold while the second one does, see Example~\ref{znotxstarfig}. 
We justify the second sufficient condition by studying the behavior of htHK systems under fixed topology in a long run. 

\subsection{Convergence and Constant Topology}
The first sufficient condition for convergence is based on agents confidence bounds. According to this condition, if an htHK trajectory enters a specific neighborhood of any equilibrium of the system, then it stays in some larger neighborhood of that equilibrium for all future iterations, and its topology remains constant. Hence, the former neighborhood is a subset of the basin of attraction for the final value at constant topology of the entering opinion vector.
  \begin{mydef}[Equi-topology neighborhoods]
  Consider an htHK system with opinion vector $z\in \real^n$.
  \begin{enumerate} 
  \item  Define the vector $\epsilon(z)\in\real^n_{\geq0}$ with entries set equal to
$$\epsilon_i(z) = 0.5 \min_{j\in\until{n}\setminus\{i\}} \{||z_i - z_j| - R| : R \in \{r_i,r_j\} \}.$$
The \emph{equi-topology neighborhood} of $z$ is a set 
   of opinion vectors $y\in\real^n$ such that for all $ i\in\until{n}$,
  \begin{align*}
    |y_i - z_i| <  \epsilon_i(z), &  \text{ if } \epsilon_i(z) > 0, \text{ and}\\ 
    |y_i - z_i| =  \epsilon_i(z), &  \text{ if }   \epsilon_i(z) = 0.
  \end{align*}
\item Define the vector $\delta(z)\in\real^n_{\geq0}$ with entries set equal to
$$\delta_i(z) =\min \{\epsilon_j(z) : j \text{ is } i \text{'s predecessor in } G_r(z) \}.$$
   The \emph{invariant equi-topology neighborhood} of $z$ is a set 
   of opinion vectors $y\in\real^n$ such
  that for  all $ i\in\until{n}$,
  \begin{align*}
    |y_i - z_i| <  \delta_i(z), &  \text{ if }
    \delta_i(z) > 0, \text{ and}\\ 
    |y_i - z_i| =  \delta_i(z), & \text{ if }
    \delta_i(z) = 0.
  \end{align*}
  \end{enumerate}
\end{mydef}
\begin{theorem}[Sufficient condition for constant topology and convergence]\label{epsilonthm}
  Consider an htHK system with trajectory $x : \real \rightarrow \real^n$. Assume that there exists an equilibrium opinion vector $z\in\real^n$ such that $x(0)$ belongs to the invariant equi-topology neighborhood of $z$. Then, for all $t \ge 0$:
\begin{enumerate}
\item\label{epsst} $x(t)$ takes value in the equi-topology neighborhood of $z$,
\item $G_r(z) = G_r(x(t))$,
\item\label{nomoderatest} $G_r(x(t))$ contains no moderate-minded component, and
\item\label{convergencest} $x(t)$ converges to $\yf(x(0))$ as time goes to infinity. 
\end{enumerate}
\end{theorem}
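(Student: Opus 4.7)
The plan is to establish a stronger invariant by induction on $t$: namely, that $x(t)$ lies in the invariant equi-topology neighborhood of $z$ for every $t \geq 0$, i.e., $|x(t)_i - z_i| < \delta_i(z)$ for all $i$ (with equality in the degenerate case $\delta_i(z) = 0$). Once this is shown, claim (i) is immediate, claim (ii) follows from a topology-preservation lemma for the (larger) equi-topology neighborhood, and claims (iii)--(iv) follow from Proposition~\ref{prop:properties-final-value}.

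I would first verify two elementary monotonicity properties together with one topology-preservation lemma. Because $G_r(z)$ always contains the self-loop at $i$ (since $|z_i - z_i| = 0 \leq r_i$), each node $i$ is a predecessor of itself, so $\delta_i(z) \leq \epsilon_i(z)$ and the invariant equi-topology neighborhood is contained in the equi-topology neighborhood. Next, whenever $j \in \mathcal{N}_i(z)$, concatenating the edge $i \to j$ onto any path $k \to \cdots \to i$ produces a path $k \to \cdots \to j$, so every predecessor of $i$ is also a predecessor of $j$ and consequently $\delta_j(z) \leq \delta_i(z)$. Finally, if $y$ lies in the equi-topology neighborhood of $z$, the triangle inequality yields $||y_i - y_j| - |z_i - z_j|| < \epsilon_i(z) + \epsilon_j(z)$, while the definition of $\epsilon$ applied to both orderings of the pair $(i,j)$ and both $R \in \{r_i, r_j\}$ gives $||z_i - z_j| - R| \geq 2\max\{\epsilon_i(z), \epsilon_j(z)\} \geq \epsilon_i(z) + \epsilon_j(z)$; hence $|y_i - y_j|$ cannot cross $r_i$ or $r_j$, no edge is created or destroyed, and $G_r(y) = G_r(z)$. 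Boundary cases where some $\epsilon_i(z) = 0$ force $y_i = z_i$ and reduce to a trivial verification.

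With these lemmas in hand, the induction proceeds as follows. The base $t = 0$ is the hypothesis. For the step, $\delta_i \leq \epsilon_i$ places $x(t)$ inside the equi-topology neighborhood, so $G_r(x(t)) = G_r(z)$ and hence $A(x(t)) = A(z)$. Using that $z$ is an equilibrium,
\begin{equation*}
  x(t+1)_i - z_i = \frac{1}{|\mathcal{N}_i(z)|} \sum_{j \in \mathcal{N}_i(z)} \bigl( x(t)_j - z_j \bigr),
\end{equation*}
so
\begin{equation*}
  |x(t+1)_i - z_i| \leq \max_{j \in \mathcal{N}_i(z)} |x(t)_j - z_j| < \max_{j \in \mathcal{N}_i(z)} \delta_j(z) \leq \delta_i(z),
\end{equation*}
the final inequality by the monotonicity $\delta_j \leq \delta_i$; the degenerate case $\delta_i(z) = 0$ forces $\delta_j(z) = 0$ and hence $x(t)_j = z_j$ for every $j \in \mathcal{N}_i(z)$, so the averaging directly gives $x(t+1)_i = z_i$. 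This closes the induction and proves (i) and (ii). Claim (iii) follows from Proposition~\ref{prop:properties-final-value}(ii)(b) applied to the equilibrium $z$: since $\yf(z) = z$ we have $G_r(z) = G_r(\yf(z))$, forcing $G_r(z) = G_r(x(t))$ to contain no moderate-minded component. Claim (iv) follows since $A(x(t))$ equals the constant matrix $A(x(0))$ for all $t$, whence $x(t) = A(x(0))^t x(0)$ converges to $\yf(x(0))$ by Proposition~\ref{prop:properties-final-value}(i).

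The main obstacle is the monotonicity lemma $\delta_j(z) \leq \delta_i(z)$ for $j \in \mathcal{N}_i(z)$, which is precisely what makes the invariant equi-topology neighborhood the correct set to propagate under the dynamics; the equi-topology neighborhood alone is not forward-invariant. Care with self-loops (ensuring each node is its own predecessor) and with zero-valued thresholds is needed throughout, but once these monotonicity facts are in place the remainder of the argument is a routine verification.
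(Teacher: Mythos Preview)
The paper does not actually contain a proof of this theorem: it states ``This theorem is discussed and proved in \cite[Theorem 4.4]{AM-FB:11f}'' and defers entirely to the companion technical report. So there is no in-paper argument to compare against.

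That said, your argument is sound and is very likely the intended one. The key structural insight---that $\delta_j(z)\le\delta_i(z)$ whenever $j\in\mathcal{N}_i(z)$, because the predecessor set of $j$ contains that of $i$---is exactly what makes the \emph{invariant} equi-topology neighborhood forward-invariant under the averaging map, whereas the equi-topology neighborhood itself need not be. Your topology-preservation lemma for the $\epsilon$-neighborhood is also correct, including the boundary case: if $|z_i-z_j|=r_i$ for some pair, then both $\epsilon_i(z)$ and $\epsilon_j(z)$ vanish (each picks up the zero term from the other index), forcing $y_i=z_i$ and $y_j=z_j$ and preserving the edge. One cosmetic point: in the induction step you pass through $\max_{j\in\mathcal{N}_i(z)}|x(t)_j-z_j|<\max_{j\in\mathcal{N}_i(z)}\delta_j(z)$, and the strict inequality needs a word when some $\delta_j(z)=0$; it is cleanest to use the average directly and note that the self-loop term $j=i$ contributes a strictly-less-than-$\delta_i(z)$ summand whenever $\delta_i(z)>0$, which forces the average strictly below $\delta_i(z)$. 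Your separate handling of the degenerate case $\delta_i(z)=0$ already covers the remaining situation. Claims (iii) and (iv) then follow exactly as you derive them from Proposition~\ref{prop:properties-final-value}.
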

This theorem is discussed and proved in \cite[Theorem 4.4]{AM-FB:11f}.

\subsection{Monotonic Convergence and Constant Topology}
The second sufficient condition for convergence is based on the rate and direction of convergence of the htHK trajectory in one time step. 
If a trajectory satisfies this condition, then any two opinions will either monotonically converge to each other or diverge from each other for all future iterations.

\begin{mydef}[Agent's per-step convergence factor]\label{k}
For an htHK trajectory $x(t)$, we define the \emph{per-step convergence factor} of an agent $i$ for which $x_i(t) - \yf_i(x(t)) \neq 0$  to be 
\begin{equation*}
k_i(x(t)) = \frac{x_i(t+1) - \yf_i(x(t)) }{ x_i(t) - \yf_i(x(t))}. 
\end{equation*}
\end{mydef}
The per-step convergence factor of a network of agents with distributed averaging was previously defined in~\cite{LX-SB:04} to measure the overall rate of convergence toward consensus. 
\begin{remark}[Monotonic convergence]\label{monotonicremark}
If the htHK trajectory $x(t)$ monotonically converges toward $\yf(x(t))$ in one time step, that is, for any $i \in \until{n}$,
\begin{equation*}
\begin{cases}
 x_i(t) \le   x_i(t+1) \le \yf_i(x(t)), \hspace{.1in}  & \mbox{ if }  x_i(t) <   \yf_i(x(t)), \\
x_i(t) \ge   x_i(t+1) \ge \yf_i(x(t)), \hspace{.1in} & \mbox{ if }  x_i(t) > \yf_i(x(t)), \\
x_i(t)  =   x_i(t+1) = \yf_i(x(t)), \hspace{.1in} & \mbox{ if }  x_i(t) = \yf_i(x(t)), 
\end{cases}
\end{equation*}
then
\begin{equation*}
\begin{cases}
0 \le k_i(x(t)) \le 1,  \hspace{.1in}  & \mbox{if }  k_i(x(t)) \mbox{ exists}, \\
x_i(t)  =  x_i(t+1) = \yf_i(x(t)), \hspace{.1in} &   \mbox{otherwise}.
\end{cases}
\end{equation*}
\end{remark}
Before proceeding, let us define the \emph{distance to final value} of any $y \in \real^n$ to be $\Delta(y) = y - \yf(y)$. 
For any open-minded agent $i$, let $k_{max_i}(y)$ and $k_{min_i}(y)$ denote the maximum and minimum per-step convergence factors over all $i$'s open-minded successors with nonzero distance to final value. Also, for any open-minded agents $i$ and $j$, let $k_{max_{i,j}}(y) = \max\{k_{max_{i}}(y), k_{max_{j}}(y)\}$ and $k_{min_{i,j}}(y) = \min\{k_{min_{i}}(y), k_{min_{j}}(y)\}$.
\begin{lemma}[Bound on per-step convergence factor]\label{convex} 
If in an htHK system with opinion vector $y\in\real^n$
\begin{enumerate}
\item $G_r(y)$ contains no moderate-minded component, and 
\item\label{deltabndforcovex} for any open-minded agent $i$ and any of its open-minded children $j$, $\Delta_i(y)\Delta_j(y)\ge 0$,
\end{enumerate}
then $k_i(A(y)y)$ is in the convex hull of $k_j(y)$'s.
\end{lemma}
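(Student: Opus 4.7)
The plan is to reduce the claim to an algebraic identity that expresses $k_i(y^+)$ as a signed combination of the $k_j(y)$'s, and then to verify, using conditions~(1) and~(2), that this signed combination is in fact convex. Throughout I write $y^+=A(y)y$ and $y^{++}=A(y)y^+$, and I work in the constant-topology regime implicit in this subsection, so that $A(y^+)=A(y)$ and hence $\yf(y^+)=\yf(y)$. The starting observation is the invariance $A(y)\yf(y)=\yf(y)$, which is immediate from $\yf(y)=\lim_t A(y)^t y$. Subtracting this from the one-step update yields $\Delta(y^+) = A(y)\Delta(y)$ and, iterating once more, $\Delta(y^{++}) = A(y)^2\Delta(y)$.

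Substituting into $k_i(y^+) = \Delta_i^{++}/\Delta_i^+$ and replacing each non-zero $\Delta_j^+$ by $k_j(y)\Delta_j$ gives the core identity
\[
k_i(y^+) \;=\; \sum_{j\in\mathcal{N}_i,\,\Delta_j\ne 0}\lambda_j\, k_j(y),\qquad \lambda_j \;=\; \frac{a_{ij}\Delta_j}{\Delta_i^+},\qquad \sum_j\lambda_j=1.
\]
What remains is to verify $\lambda_j\ge 0$ for every relevant $j$. I would split the sum according to the classification of each child of $i$. Closed-minded children are immediate: by condition~(1) and the complete-sink definition, $\mathcal{N}_j$ equals the entire closed-minded component of $j$, so $y_j^+$ equals the component-mean $\yf_j(y)$, and therefore $k_j(y)=0$ whenever defined, eliminating those terms from the numerator. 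Open-minded children are controlled by condition~(2), which gives $\Delta_j\Delta_i\ge 0$ and thus forces every non-zero $a_{ij}\Delta_j$ appearing in the numerator to share the sign of $\Delta_i$.

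The step I anticipate as the main obstacle is handling the denominator $\Delta_i^+ = \sum_{j\in\mathcal{N}_i}a_{ij}\Delta_j$, which also absorbs contributions from closed-minded children whose individual $\Delta_j$ may have either sign. I plan to handle this by grouping closed-minded children by their components: for each closed-minded component $C$ one has $\sum_{j\in C}\Delta_j = 0$, because $\yf_j(y)$ is the common component-mean. Combined with the small-diameter, complete-sink structure of $C$ and the hypothesis that $i$ already has one neighbor in $C$, I expect the total contribution of $C$ to $\Delta_i^+$ either to cancel outright (the generic case $C\subseteq\mathcal{N}_i$) or to be dominated by the open-minded contributions via condition~(2). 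Once $\Delta_i^+$ is shown to carry the sign of $\Delta_i$, the $\lambda_j$ on open-minded children become non-negative, the closed-minded weights are neutralised by $k_j(y)=0$ in the numerator, and so $k_i(y^+)$ emerges as a genuine convex combination of the $k_j(y)$'s.
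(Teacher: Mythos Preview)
Your approach coincides structurally with the paper's: write $k_i(y^+)$ as a ratio of linear combinations of the $\Delta_j$'s and argue that the resulting weights are nonnegative and sum to one. The difference lies in how the closed-minded children are eliminated. The paper does \emph{not} argue via ``$k_j(y)=0$ for closed-minded $j$ in the numerator''; instead it invokes, from the companion paper, the identity $y^+_\Theta - y^*_\Theta = \Theta(y)(y_\Theta - y^*_\Theta)$, which says that in \emph{both} numerator and denominator of the displayed ratio only the open-minded children appear. Condition~(ii) then forces all remaining $\Delta_j$ to share a sign, and convexity is immediate.

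Your ``generic case'' cancellation $\sum_{j\in C}\Delta_j=0$ when $C\subseteq\mathcal{N}_i$ is exactly the mechanism behind that cited identity. The fallback ``dominated by the open-minded contributions via condition~(ii)'' is where your plan has a real gap: sign domination of $\Delta^+_i$ is not enough to yield a convex combination. If the closed-minded contribution to $\Delta^+_i$ does not vanish \emph{exactly}, then the open-minded weights $\lambda_j=a_{ij}\Delta_j/\Delta^+_i$ no longer sum to $1$, and $k_i(y^+)=\sum_{j\ \text{open}}\lambda_j k_j(y)$ can escape the convex hull of the open-minded $k_j$'s; throwing the closed-minded $k_j=0$ back into the hull does not help either, since the associated $\lambda_j$ can be negative. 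So you need exact cancellation, not domination, and that is precisely what the paper imports from its extended version rather than proving here.
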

\begin{theorem}[Sufficient condition for constant topology and monotonic convergence]\label{constanttopology}
  Assume that in an htHK system, the opinion vector $y\in\real^n$ satisfies
  the following properties:
  \begin{enumerate}
  \item\label{nbcond} the networks of agents with opinion vectors $y$ and $\yf(y)$  have the same interconnection topology, that is, $G_r(y) = G_r(\yf(y))$; 
  \item\label{ordercond} for any agents $i,j$, if $y_i\ge y_j$, then $\yf_i(y)\ge \yf_j(y)$; 
  \item\label{kcond}  $y$ monotonically converges to $\yf(y)$  in one iteration;
  \item\label{deltabndcond} for any open-minded neighbors $i$ and $j$, $\Delta_i(y)\Delta_j(y)\ge0$; 
  \item\label{kbndcond} any open-minded agents $i$ and $j$ that belong to the same WCC of $G_r(y)$ and that have nonzero $\Delta_i(y)$ and $\Delta_j(y)$,  have the following property: \\
a) if the sets of open-minded children of $i$ and $j$ are identical, then $k_i(y)=k_j(y)$, and\\
b) otherwise, assuming that $\Delta_i(y)\ge \Delta_j(y)$, 
\end{enumerate}
\begin{multline*}
  k_{max_{i,j}}(y) - k_{min_{i,j}}(y) \le   \min\{1- k_{max_{i,j}}(y), k_{min_{i,j}}(y) \} \\
\times  \min \{ \Big|1 - \frac{\alpha^m\Delta_j(y)}{\beta^m\Delta_i(y)}\Big| : \alpha \in [k_{min_j}(y),k_{max_j}(y)],
\\ \beta \in [k_{min_i}(y),k_{max_i}(y)], m\in \mathbb{Z}_{\geq0}\}
\end{multline*}
  Then the solution $x(t)$ from the initial condition $x(0)=y$ has the
  following properties: the proximity digraph $G_r(x(t))$ is equal to
  $G_r(y)$ for all time $t$, and the solution $x(t)$ monotonically
  converges to $\yf(y)$ as $t$ goes to infinity.
\end{theorem}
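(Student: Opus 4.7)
The overall strategy is induction on $t$: I want to show that the five hypotheses (1)--(5) transfer from $y = x(t)$ to $x(t+1) = A(y)y$. Once this one-step closure is established, the conclusion is immediate, since preserved topology forces $A(x(t)) = A(y)$ and $\yf(x(t)) = \yf(y)$ for all $t$; condition (3) applied at every step means each component of $x(t+1)$ moves monotonically from $x(t)$ toward $\yf(y)$, and the geometric contraction inside the sub-stochastic open-minded block $\Theta(y)$ appearing in Proposition~\ref{prop:properties-final-value} then drives $x(t) \to \yf(y)$.

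The first step is to propagate the topology. Condition (1) together with part (ii) of Proposition~\ref{prop:properties-final-value} guarantees that $\yf(y)$ is an equilibrium and that $G_r(y)$ has no moderate-minded component, so only closed-minded sinks and open-minded agents appear. Using order preservation (2), monotonic convergence (3), and the sign condition (4), I would check that every edge $|y_i - y_j| \le r_i$ and every non-edge $|y_i - y_j| > r_i$ is preserved at $x(t+1)$: the closed-minded SCCs, being complete sinks, collapse to their component mean in a single step, while the open-minded agents move in lockstep thanks to (4), so pairwise distances remain inside the windows set by the common graph $G_r(y) = G_r(\yf(y))$. Hence $G_r(x(t+1)) = G_r(y)$, $A(x(t+1)) = A(y)$, $\yf(x(t+1)) = \yf(y)$, and the distance-to-final-value obeys the linear recursion $\Delta(x(t+1)) = A(y)\Delta(y)$.

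The second step is to propagate conditions (2), (3), and (4). Lemma~\ref{convex}, whose hypotheses are supplied by the no-moderate-minded conclusion and by (4), places each new $k_i(x(t+1))$ in the convex hull of $\{k_j(y) : j \text{ is an open-minded child of } i\}$. Because (3) together with Remark~\ref{monotonicremark} confines the $k_j(y)$ to $[0,1]$, so do the $k_i(x(t+1))$, which is exactly (3) at time $t+1$. Non-negativity of $A(y)$ combined with (4) preserves the sign pattern of $\Delta$, giving (4) at the next step; and averaging by a row-stochastic matrix with fixed topology preserves the componentwise order, giving (2).

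The third step, which I expect to be the principal obstacle, is to propagate the technical bound (5). One iteration yields $\Delta_i(x(t+1)) = k_i(y)\Delta_i(y)$ with $k_i(y) \in [k_{min_i}(y), k_{max_i}(y)]$, so the ratio $\Delta_j(x(t+1))/\Delta_i(x(t+1))$ takes precisely the form $(k_j(y)/k_i(y))\Delta_j(y)/\Delta_i(y)$ --- the very quantity that the inner minimum over $m$ in (5) anticipates at exponent $m+1$, matching exponent $m$ at time $t+1$. The outer factor $k_{max_{i,j}} - k_{min_{i,j}}$ contracts by Lemma~\ref{convex}, and the companion factor $\min\{1 - k_{max_{i,j}}, k_{min_{i,j}}\}$ moves in a compatible direction. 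The difficulty lies in tracking these three competing quantities simultaneously and in handling the case split between (5a), where $i$ and $j$ share their open-minded children so $k_i = k_j$ and the bound degenerates, and (5b), where the full inequality must be re-verified after the shift in the inner minimum over $m$.
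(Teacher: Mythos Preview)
Your inductive strategy matches the paper's, and your identification of Lemma~\ref{convex} as the engine behind propagating condition~(iii) is correct. However, there is a genuine gap in your Step~1: you claim that conditions (ii), (iii), (iv) alone suffice to preserve the topology, with (v) entering only to propagate itself. In fact the paper uses condition~(v)(b) \emph{crucially} in the topology-preservation step. When $i$ and $j$ are both open-minded with $\Delta_i \ge \Delta_j > 0$, monotonic convergence~(iii) only tells you that each coordinate moves toward its own final value; it does not prevent $y^+_i$ from lagging while $y^+_j$ races ahead, which could increase $|y^+_i - y^+_j|$ beyond $|y_i - y_j|$ or drop it below $|y^*_i - y^*_j|$ and flip an edge. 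The paper's computation
\[
(y^+_i - y^+_j) - (y^*_i - y^*_j) = (k_i - k_j)\Delta_i + k_j(\Delta_i - \Delta_j)
\]
shows that the sandwich inequality $y^*_i - y^*_j \le y^+_i - y^+_j \le y_i - y_j$ requires $|k_i - k_j| \le (1 - \Delta_j/\Delta_i)\min\{1-k_j, k_j\}$, and that bound is supplied precisely by condition~(v)(b) at $m=0$. Your phrase ``open-minded agents move in lockstep thanks to~(4)'' is the point of failure: (iv) only pins down signs, not rates, and Figure~\ref{example6} in the paper is exactly a counterexample where (i)--(iv) hold but (v) fails and the topology breaks.

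A second, related gap: your argument that ``averaging by a row-stochastic matrix with fixed topology preserves the componentwise order'' is false in general (row-stochasticity does not imply monotonicity). The paper instead deduces order preservation at $y^+$ from the sandwich inequalities just described, so this too ultimately rests on condition~(v). Finally, your Step~1 omits the case where $i$ and $j$ are open-minded but in \emph{different} WCCs, which the paper handles by a separate geometric argument using Proposition~\ref{prop:properties-final-value}(ii)(c) about boundary closed-minded components.
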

Lemma~\ref{convex} is employed in the proof of Theorem~\ref{constanttopology}, 
and the proofs to both are presented in the Appendix.
\begin{example}\label{znotxstarfig}
Consider an htHK trajectory with $x(0)=[0\;\; 0.6 \;\; 1]^T$ and confidence bounds $r=[0.5\;\; 1 \;\; 0.25]^T$, which converges with constant topology after $t=0$. It can be shown that $x(0)$ does not belong to any equilibrium vector's invariant equi-topology neighborhood. However, the sufficient condition of Theorem~\ref{constanttopology} eventually holds for this trajectory.
\end{example}
\subsection{Evolution under Constant Topology} 
Motivated by Conjecture~\ref{conjtau}, we investigate the rate and direction of convergence of an htHK trajectory whose interconnection topology remains constant for infinite time.
\begin{mydef}[Leader SCC]\label{leaderSCC}
Consider an htHK system with opinion vector $y\in\real^n$. 
For any open-minded SCC of $G_r(y)$, $S_k(y)$, denote the set of its open-minded successor SCC's by $\mathcal{M}(S_k(y))$, which includes $S_k(y)$.
We define $S_k(y)$'s \emph{leader SCC} to be an SCC in $\mathcal{M}(S_k(y))$ whose adjacency matrix has the largest spectral radius among all SCC's of $\mathcal{M}(S_k(y))$. 
\end{mydef}
\begin{theorem}[Evolution under constant topology]\label{convlemma}
Consider an htHK trajectory $x(t)$.
Assume that there exists a time $\tau$ after which $G_r(x(t))$ remains
  unchanged, that is, $G_r(x(t)) = G_r(x(\tau))$. Then, the following statements hold for all $t \ge \tau$: 
  \begin{enumerate}
  \item\label{convfirst} $\yf(x(t)) = \yf(x(\tau))$.
  \item\label{convsecond} $G_r(x(t))$ contains no moderate-minded component.
  \item\label{convthird} Consider any open-minded SCC of $G_r(x(t))$, $S_k(x(t))$, and its leader SCC $S_m(x(t))$, with adjacency matrices denoted by $\Theta_{k}$ and $\Theta_{m}$, respectively. Then,
\begin{enumerate}
\item\label{convthirda} for any $i \in S_k(x(t))$, either $x_i(t) - \yf_i(x(t)) = 0$ or its per-step convergence factor converges to the spectral radius of $\Theta_{m}$ as time goes to infinity, and
\item\label{convthirdb} if the spectral radius of $\Theta_{k}$ is strictly less than that of $\Theta_{m}$, then there exists $t_1 \ge \tau$ such that for all $i \in S_k(x(t))$, $j \in S_m(x(t))$, and $t \ge t_1$,
\begin{gather*}
      x_j(t_1) < \yf_j(x(t_1)) \quad\implies\quad  x_i(t) \leq \yf_i(x(t)),
      \\
      x_j(t_1) > \yf_j(x(t_1)) \quad\implies\quad  x_i(t) \geq \yf_i(x(t)).
    \end{gather*} 
 \end{enumerate}
\end{enumerate}
\end{theorem}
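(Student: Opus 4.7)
The plan is to establish the three parts sequentially, using the crucial fact that under the constant-topology hypothesis the transition matrix $A := A(x(\tau))$ is independent of $t$ for $t \ge \tau$, so $x(t) = A^{t-\tau} x(\tau)$. Claim~(i) is then immediate: $\yf(x(t)) = \lim_{s\to\infty} A^s x(t) = \lim_{s\to\infty} A^{s+t-\tau} x(\tau) = \yf(x(\tau))$. Claim~(ii) I argue by contradiction: if $G_r(x(\tau))$ contained a moderate-minded component $M$, then since $M$ is an SCC sink it would evolve autonomously under the restricted matrix $A|_M$, which is row-stochastic, irreducible, and aperiodic; by Perron--Frobenius its opinions converge to a consensus value, so for every pair $i,j \in M$ originally lacking an edge (i.e.\ $|x_i(\tau)-x_j(\tau)| > r_i$) the distance $|x_i(t)-x_j(t)|$ would drop below $r_i$ at some finite $t$, creating a new edge and contradicting constant topology.

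For claim~(iii), applying $A$ to the limit definition of $\yf$ yields $A\,\yf(x(\tau)) = \yf(x(\tau))$, so the displacement $\Delta(t) := x(t) - \yf(x(\tau))$ obeys the linear recursion $\Delta(t+1) = A\,\Delta(t)$. Writing $A$ in its canonical block lower-triangular form and invoking claim~(ii), the closed-minded diagonal blocks are of the form $\tfrac{1}{|C_k|}\mathbf{1}\mathbf{1}^T$, which annihilate the mean-zero $\Delta|_{C_k}(\tau)$; hence $\Delta$ vanishes on closed-minded nodes for all $t \ge \tau + 1$, and the recursion reduces to $\Delta|_O(t+1) = \Theta\,\Delta|_O(t)$. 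Each diagonal block $\Theta_l$ is non-negative and irreducible, and because every open-minded SCC has at least one out-edge leaving the SCC, at least one row sum of $\Theta_l$ is strictly below $1$, yielding $\rho(\Theta_l)<1$ by Perron--Frobenius.

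To analyse $\Delta|_{S_k}(t)$, apply the Jordan/spectral decomposition to the block lower-triangular restriction of $\Theta$ to the indices of $\mathcal{M}(S_k)$; by definition of the leader the largest spectral radius among $\{\rho(\Theta_l):l\in\mathcal{M}(S_k)\}$ is $\rho(\Theta_m)$. The resulting asymptotic expansion
\[
\Delta|_{S_k}(t) \;=\; \rho(\Theta_m)^{\,t-\tau-1}\bigl(c\,w + o(1)\bigr)
\]
holds, where $c$ is a projection coefficient determined by $\Delta(\tau+1)$ restricted to the successors of $S_k$, and $w$ is the $S_k$-component of the corresponding leading generalized eigenvector. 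This yields (iii)(a): wherever the leading coefficient is nonzero, $k_i(x(t)) = \Delta_i(t+1)/\Delta_i(t) \to \rho(\Theta_m)$; otherwise $\Delta_i$ is identically zero past $\tau+1$. For (iii)(b), the strict inequality $\rho(\Theta_k)<\rho(\Theta_m)$ makes the resolvent $(\rho(\Theta_m) I - \Theta_k)^{-1} = \sum_{s \ge 0} \rho(\Theta_m)^{-(s+1)} \Theta_k^s$ non-negative, so $w$ inherits positivity from the Perron eigenvector of $\Theta_m$; hence the asymptotic sign of $\Delta_i(t)$ on $S_k$ matches the sign of the projection coefficient, which is the asymptotic sign of $\Delta_j(t)$ on $S_m$.

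The main obstacle I anticipate is the careful bookkeeping in (iii)(a) when $\Theta_k$ itself has eigenvalues at $\rho(\Theta_m)$ (strict inequality is only assumed in (iii)(b)): Jordan blocks can then introduce polynomial-in-$t$ corrections, and one must verify both that the leading-order ratio still equals $\rho(\Theta_m)$ (which it does, since $(t+1)^p/t^p \to 1$) and that the generalized eigenvector component on $S_k$ does not vanish accidentally at individual open-minded nodes, which would instead put those nodes into the ``$\Delta_i \equiv 0$'' alternative rather than the convergent-ratio alternative.
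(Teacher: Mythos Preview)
The paper does not actually prove this theorem in the text you were given; after stating it, the authors write that ``a proof to Theorem~\ref{convlemma} \dots\ is provided in~\cite{AM-FB:11f}.'' So there is no in-paper argument to compare against. That said, your outline follows the natural route one would expect from the companion paper: freeze $A=A(x(\tau))$, obtain (i) from $\lim_s A^{s+t-\tau}=\lim_s A^s$, obtain (ii) by observing that a moderate-minded SCC evolves autonomously to consensus and therefore must acquire an edge in finite time, and for (iii) pass to the displacement $\Delta(t)=x(t)-\yf(x(\tau))$, kill the closed-minded coordinates after one step, and analyse $\Delta_O(t{+}1)=\Theta\,\Delta_O(t)$ via the block-triangular Perron--Frobenius structure of $\Theta$. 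Your resolvent observation for (iii)(b), that $(\rho(\Theta_m)I-\Theta_k)^{-1}=\sum_{s\ge 0}\rho(\Theta_m)^{-s-1}\Theta_k^{\,s}\ge 0$ propagates the sign of the Perron vector of $\Theta_m$ up to $S_k$, is exactly the right mechanism.

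There is, however, one genuine gap in your handling of (iii)(a). You assert the dichotomy ``leading coefficient nonzero $\Rightarrow k_i\to\rho(\Theta_m)$; otherwise $\Delta_i$ is identically zero past $\tau{+}1$.'' The second alternative does not follow. If the projection of $\Delta(\tau{+}1)$ onto the $\rho(\Theta_m)$-generalized eigenspace happens to vanish, then $\Delta|_{S_k}(t)$ is governed by a strictly smaller eigenvalue and is \emph{not} identically zero, yet $k_i(x(t))$ then converges to that smaller modulus rather than to $\rho(\Theta_m)$, violating the stated dichotomy. Your final paragraph worries about individual entries $w_i$ of the leading (generalized) eigenvector vanishing, but that is not the issue: the Perron vector of $\Theta_m$ is strictly positive, and under the strict inequality $\rho(\Theta_k)<\rho(\Theta_m)$ your resolvent argument makes the $S_k$-block strictly positive as well. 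The real obstruction is the \emph{scalar} projection coefficient $c$, which depends on the initial data and can vanish on a measure-zero set of trajectories. Either the full version imposes a genericity hypothesis that rules this out, or an additional structural argument is needed to show $c\ne 0$ whenever $\Delta|_{\mathcal{M}(S_k)}\not\equiv 0$; as written, your sketch does not supply one.
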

In above theorem, parts \ref{convthirda} and \ref{convthirdb} tell us, respectively, that the rates and directions of convergence of opinions in an open-minded SCC toward the final value at constant topology are governed by the rate and direction of convergence of its leader SCC.
In our htHK model, the adjacency matrix of a large SCC has a large spectral radius. Theorem~\ref{convlemma} demonstrates that the per-step convergence factor of such SCC is also large. Owing to the inverse relation between the per-step convergence factor of an agent and its rate of convergence toward the final value, the rate of convergence of a large open-minded SCC toward final opinion vector is small. Therefore, Theorem~\ref{convlemma} tells us that in a society with fixed interconnection topology, individuals converge to a final decision as slow as the slowest group of agents to whom they directly or indirectly listen. 
An example for the importance of convergence direction is that individuals follow their leaders in converging to a final price from low to high or vice versa. However, the final prices might be different, since they collect separate sets of information from closed-minded agents. A proof to Theorem~\ref{convlemma} and some numerical examples to facilitate the understanding of the conditions and results of the theorem are provided in~\cite{AM-FB:11f}.

\begin{remark}[Justification of the sufficient condition for monotonic convergence] 
We justify the conditions of  Theorem~\ref{constanttopology} employing Conjecture~\ref{conjtau} and Theorem~\ref{convlemma}. Note that these conditions are sufficient but not necessary for monotonic convergence.
Based on our conjecture, we assume that the topology of an htHK trajectory $x(t)$ remains unchanged after time $\tau$, thus condition~\ref{nbcond} of Theorem~\ref{constanttopology} is satisfied. Regarding conditions~\ref{ordercond} and \ref{kcond}, by statement~\ref{convthirda} of Theorem~\ref{convlemma}, there exist a time step $t_1 \ge \tau$, after which the per-step convergence factor of all agents belong to $[0,1]$. Therefore, the opinion vector converges toward its final value at constant topology monotonically in one step. Moreover, since the opinion vector is discrete, this monotonic convergence results in existence of a time step $t_2 \ge \tau$, after which condition~\ref{ordercond} of the Theorem~\ref{constanttopology} holds.
Regarding condition~\ref{deltabndcond}, statement~\ref{convthird} of Theorem~\ref{convlemma} shows that there exists time step $t_3 \ge \tau$, after which for any open-minded $i$ and $j$ it is true that: if they both belong to one SCC, then $\Delta_i(x(t)) \Delta_j(x(t)) \ge 0$; and if they belong to two separate SCC's with adjacency matrices $\Theta_{1}$ and $\Theta_{2}$, respectively, while $j$ is a successor of $i$, then when $\rho(\Theta_{1}) < \rho(\Theta_{2})$, often it is true that $\Delta_i(x(t)) \Delta_j(x(t)) \ge 0$, and when $\rho(\Theta_{1}) > \rho(\Theta_{2})$, $\Delta_j(x(t))$ converges to zero faster than $\Delta_i(x(t))$ and hence $\Delta_i(x(t)) \Delta_j(x(t)) \simeq 0$. 
Regarding condition~\ref{kbndcond} part (a), if $i$ and $j$ have the same set of open-minded children at time $t$, then $k_i(x(t+1)) = k_j(x(t+1))$, see proof of Theorem~\ref{constanttopology}.
Finally, we explain why the upper bound in condition~\ref{kbndcond} part (b) is less restrictive as time goes to infinity. Since, for such agent $i$, the distance to final values of all successors with smaller per-step convergence factors converge to zero, the interval $[k_{min_i}(x(t)),k_{max_i}(x(t))]$ reduces to one value, that is $k_{max_i}(x(t))$ to which $k_i(x(t))$ converges. Consequently, for large $t$, $ k_{max_{i,j}}(x(t))=\max\{k_i(x(t)),k_j(x(t))\}$ , $\alpha = k_j(x(t))$, and $\beta = k_i(x(t))$. Also, if $\Delta_i(x(t)) \ge \Delta_j(x(t))$, then $k_i(x(t)) \ge k_j(x(t))$, and hence 
$$\min_{m,\alpha,\beta}\Big|1 - \frac{\alpha^m\Delta_j(x(t))}{\beta^m\Delta_i(x(t))}\Big| \simeq 1-\frac{\Delta_j(x(t))}{\Delta_i(x(t))}.$$
A system may monotonically converge under fixed topology while condition~\ref{kbndcond} of Theorem~\ref{constanttopology} is not satisfied. However, Figure~\ref{example6}  illustrates the sufficiency of this condition.
\end{remark}
\begin{figure}
  \centering
$\begin{array}{ccc}
  \includegraphics[width=1.65in,keepaspectratio]{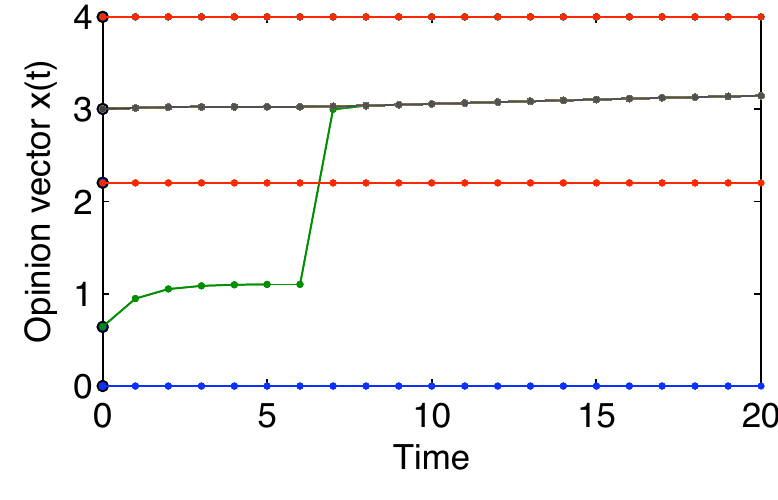} &
  \includegraphics[width=1.6in,keepaspectratio]{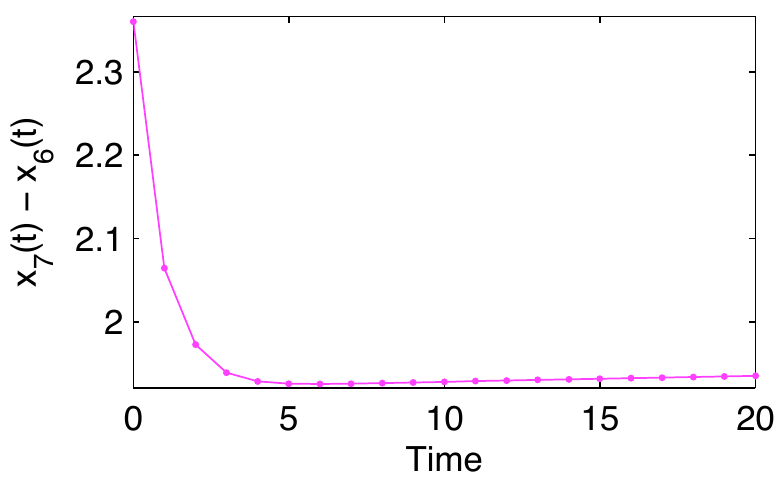}
\end{array}$
  \caption{Illustrates an htHK trajectory with $x_0  =  [0  \;\;    2.2  \;\;   4 \;\;    4\;\;     4\;\;    0.64 \;\;      3*{\bf 1}^T_{200}  ]^T$ and  $ r  =  [0.01  \;\;  0.01    \;\;  0.01    \;\;  0.01    \;\;   0.01  \;\;  1.9254   \;\;  2*{\bf 1}^T_{200} ]^T$ (left), and the non-monotonic evolution of the value $x_7(t) - x_6(t)$ if the proximity digraph remains fixed and equal to $G_r(x(0))$ (right), which is due to the large difference $k_6(x(t)) - k_7(x(t))$. The trajectory satisfies all conditions but \ref{kbndcond} of Theorem~\ref{constanttopology} at time steps $t=0,\dots,5$. 
The proximity digraph $G_r(x(0))$ contains two open-minded SCC's $\{x_6\}$ and $\{x_7,\dots,x_{206}\}$, who are two open-minded WCC's and weakly connected in $G_r(x(0))$. The per-step convergence factors of their agents, which is approximately equal to the spectral radius of the adjacency matrices of their SCC's (0.3333 and 0.9804), do not satisfy the boundary condition~\ref{kbndcond}. Therefore, the monotonic convergence of opinion vector, or equivalently equation~\eqref{distancebound}, does not hold.}
\label{example6}
\end{figure}

\section{Conclusion}\label{conclusion}
In this paper, we studied the heterogeneous HK (htHK) model of opinion dynamics. We provided two novel sufficient conditions that guarantee convergence and constant interconnection topology for infinite time, while one condition also guarantees monotonicity of convergence. 
Furthermore, we demonstrated that in the evolution under fixed topology, individuals converge to a final decision as slow as the slowest group to whom they directly or indirectly listen. 
The main future challenge is to prove the eventual convergence of all htHK systems. 
One approach is to verify that any trajectory is ultimately confined to the basin of attraction of an equilibrium. 

\bibliographystyle{plain}
\bibliography{../svn2/ref/alias,../svn2/ref/FB,../svn2/ref/Main}

\appendix
From here on, we often drop $y$ argument, and for any $y \in \real^n$, we denote $A(y) y$ by $\yn$ and $\yf(y)$ by $y^*$.  \renewcommand{\yf}{y^*}

\begin{proof}[Proof of Lemma~\ref{convex}]
If there is no moderate-minded component in $G_r(y)$, then $\yn_\Theta - \yf_\Theta = \Theta(y) (y_\Theta -\yf_\Theta)$, where $y_\Theta$ is the opinion vector of the open-minded class whose adjacency matrix is $\Theta(y)$, see \cite[Theorem 6.4]{AM-FB:11f}. 
Consider an open-minded agent $i$ whose children belong to the set $\{1,\dots,m\}$, and denote the entries of the adjacency matrix $A(y)$ by $a_{ij}$, then
\begin{multline}\label{kplus}
k_i(\yn) = \frac{a_{i1}(\yn_1 - \yf_1) + \dots + a_{im}(\yn_1 - \yf_m)}{a_{i1}(y_1 - \yf_1) + \dots + a_{im}(y_1 - \yf_m)}\\
  = \frac{a_{i1} k_1(y) \Delta_1(y)+ \dots + a_{im}k_m(y) \Delta_m(y)}{a_{i1}\Delta_1(y)+ \dots + a_{im}\Delta_m(y)}.
\end{multline}
Under condition~\ref{deltabndforcovex}, all $\Delta_j(y)$'s have the same sign, and hence all the terms in the right hand side are positive. 
Therefore, $k_i(\yn)$ is in the convex hull of $k_j(y)$'s.
\end{proof}
\begin{proof}[Proof of Theorem \ref{constanttopology}]
Here, we show that if $x(0) = y$ satisfies all the theorem's conditions, then $\yn$ also satisfies them, and similarly they hold for all subsequent times. Note that condition~\ref{kcond} guarantees entrywise monotonic convergence, and condition~\ref{nbcond} guarantees constant topology. 
Let us start by proving that $G_r(y) = G_r(\yn)$. On account of Proposition~\ref{prop:properties-final-value} part~\ref{nomoderate} and under condition~\ref{nbcond}, there are no moderate-minded component in $G_r(y)$, thus, for any $i,j\in\until{n}$, four cases are possible:

1. $i$ and $j$ are open-minded and weakly connected in $G_r(y)$. \\
a) If $\Delta_i \Delta_j > 0$, then without loss of generality we assume that $\Delta_i \ge \Delta_j > 0$, since otherwise we can multiply the opinion vector by $-1$. Hence, the monotonic convergence of the two opinions toward each other, or equivalently, 
\begin{equation}\label{distancebound}
\yf_i - \yf_j  \le \yn_i - \yn_j \le y_i - y_j,
\end{equation}
should be proved. Under condition~\ref{kbndcond}, it is true that
$|k_i - k_j| \le (1- \frac{ \Delta_j}{\Delta_i})\min\{ 1-k_j,  k_j \}.$
On the other hand,
\begin{multline*}
(\yn_i - \yn_j) - (\yf_i - \yf_j ) = (k_i - k_j) \Delta_i + k_j(\Delta_i -  \Delta_j) \\
\le  (1-k_j) \frac{\Delta_i- \Delta_j}{\Delta_i} \Delta_i + k_j(\Delta_i -  \Delta_j)  = \Delta_i -  \Delta_j,
\end{multline*}
which implies that $\yn_i - \yn_j  \le y_i - y_j.$ Furthermore,
\begin{multline*}
(\yn_i - \yn_j) - (\yf_i - \yf_j )  \ge  -|k_i - k_j| \Delta_i + k_j(\Delta_i -  \Delta_j)\\
\ge - k_j \frac{\Delta_i- \Delta_j}{\Delta_i} \Delta_i + k_j(\Delta_i -  \Delta_j)= 0,
\end{multline*} 
which implies that $\yn_i - \yn_j  \ge \yf_i - \yf_j.$ Now, we can show that the neighboring relation between $i$ and $j$ in the digraph $G_r(\yn)$ is equal to that of $G_r(y)$. We let $r$ denote either $r_i$ or $r_j$. The sign of $|y_i - y_j| - r$, $|\yn_i - \yn_j| -r$, and $|\yf_i - \yf_j| -r$ govern the neighboring relations between $i$ and $j$ in the digraphs $G_r(y)$, $G_r(\yn)$, and $G_r(\yf)$, respectively. Using inequalities~\eqref{distancebound} and condition~\ref{ordercond}
\begin{equation}\label{sangwich}
\begin{cases}
0 < \yf_i - \yf_j \le \yn_i - \yn_j \le y_i - y_j \hspace{.1in}  & \mbox{if} \hspace{.1in}  y_i \ge y_j, \\
\yf_j - \yf_i  \ge \yn_j - \yn_i \ge y_j - y_i > 0\hspace{.1in} & \mbox{if} \hspace{.1in} y_i \le y_j ,
\end{cases}
\end{equation}
subtracting $r$ from above inequalities gives
\begin{equation*}
\begin{cases}
|\yf_i - \yf_j| -r  \le  |\yn_i - \yn_j| -r  \le |y_i - y_j| -r   \mbox{ if }   y_i \ge y_j, \\
|\yf_j - \yf_i| -r  \ge |\yn_j - \yn_i| -r \ge |y_j - y_i| -r   \mbox{ if }  y_i \le y_j .
\end{cases}
\end{equation*}
Hence, $|\yn_i - \yn_j| -r$ is bounded between the two other values, which have the same sign by condition~\ref{nbcond}. Therefore, $i$ and $j$'s neighboring relation is preserved in $G_r(\yn)$. \\
b)  If $\Delta_i \Delta_j \le 0$, then for instance assume that  $\Delta_i \ge 0 \ge \Delta_j$. By condition~\ref{kcond}, it is easy to see that
$$ y_i - \yf_i \ge \yn_i - \yf_i \ge 0 \ge \yn_j - \yf_j  \ge y_j -\yf_j.$$
Using above inequalities and under condition~\ref{ordercond}, inequalities~\eqref{sangwich} hold, which again proves that $i$ and $j$'s neighboring relation is preserved in $G_r(\yn)$.

2. $i$ and $j$ are open-minded and belong to two separate WCC's of $G_r(y)$, whose agent sets are $\mathcal{V}_1$ and $\mathcal{V}_2$. Since $G_r(y) = G_r(\yf)$, by Proposition~\ref{prop:properties-final-value} part~\ref{midclosed}, the minimum and maximum opinions of a separate WCC in both $G_r(y)$ and $G_r(\yf)$ belong to closed-minded components. 
For any subgraph of $G_r(y)$, let us define its opinion range to be a closed interval in $\real$ between the minimum and maximum opinions of its agents; and its \emph{sensing range} to be the union of closed intervals in the confidence bounds of its agents around their opinions.
Therefore, the sensing range of $\mathcal{V}_1$ is separated from the opinion range of $\mathcal{V}_2$ in both $G_r(y)$ and $G_r(\yf)$. Due to monotonic convergence toward $\yf$ in one step, the sensing range of $\mathcal{V}_1$ in $G_r(\yn)$ lies in the union of its sensing ranges in $G_r(y)$ and $G_r(\yf)$. The boundary closed-minded component of $\mathcal{V}_1$ in $G_r(y)$ keeps the sensing range of $\mathcal{V}_1$ away from the opinion range of $\mathcal{V}_2$ in $G_r(\yn)$, see Figure~\ref{bothfig}~(a). Thus, two separate WCC's in $G_r(y)$ remain separate in $G_r(\yn)$.
\begin{figure}
  \centering
    \includegraphics[width=2.9in,keepaspectratio]{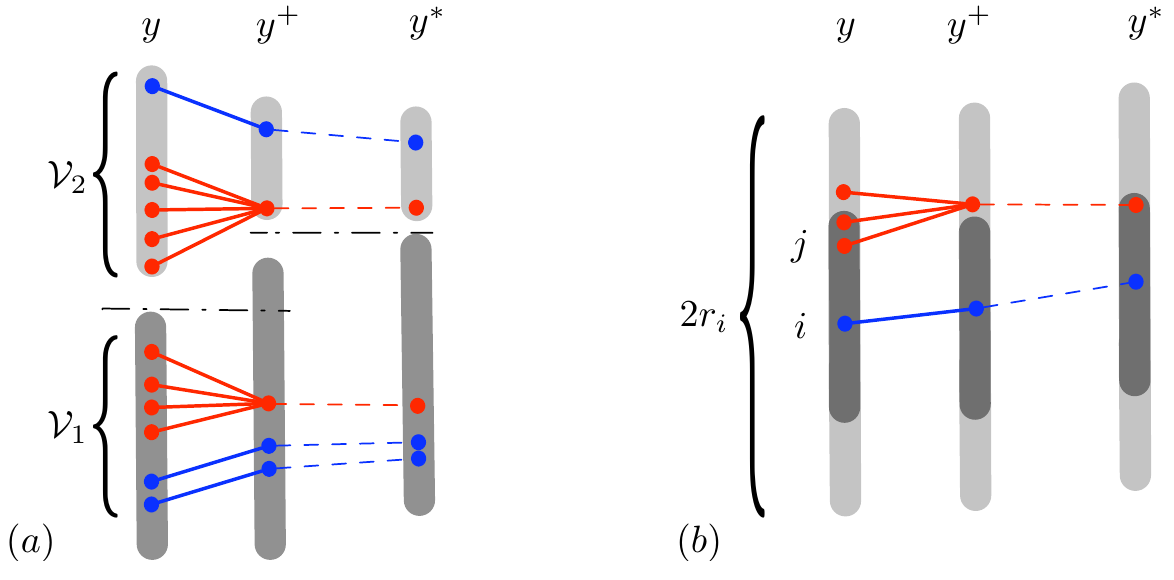}
  \caption{For the proof of Theorem~\ref{constanttopology}: (a) illusterates the sets of agents in two separate WCC's of $G_r(y)$, $\mathcal{V}_1$ and $\mathcal{V}_2 $. If $\mathcal{V}_1$'s sensing range (dark gray) is separated from $\mathcal{V}_2$'s opinion range (light gray) in $G_r(y)$ and $G_r(\yf)$, owing to boundary closed minded components (red), these ranges can not overlap in $G_r(\yn)$; and (b) shows that open-minded $i$ under light gray bound of confidence listens to closed-minded $j$ and its component in $G_r(y)$ and $G_r(\yf)$. Since $G_r(y) = G_r(\yf)$, closed-minded components reach consensus in $G_r(\yf)$. Otherwise, $i$ could listen to $j$ under dark gray bound of confidence, and get disconnected in $G_r(\yn)$.}    
\label{bothfig}
\end{figure} 

3.  $i$ and $j$ are both closed-minded in $G_r(y)$, hence, $\yn_i = \yf_i$ and $\yn_j = \yf_j$. The equality $ \yn_i - \yn_j = \yf_i - \yf_j $ tells us that neighboring relation between $i$ and $j$ in $G_r(\yn)$ is same as in $G_r(\yf)$, and consequently in $G_r(y)$. 

4. $i$ is open-minded and $j$ is closed-minded in $G_r(y)$. Since agents in one closed-minded component reach consensus in $G_r(\yf)$, $i$'s neighboring relation with $j$ in $G_r(y)$ is the same as its relation with other agents in $j$'s component. Assume that $y_i - y_j \le r_i$, see Figure~\ref{bothfig} (b), then $y_i - y_k \le r_i$ for all $k$ in $j$'s component. The average of the latter inequalities gives $y_i -\yn_j \le r_i $, and from $G_r(y) = G_r(\yf)$ we have $\yf_i - \yf_j \le r_i $, where for closed-minded $j$, $\yf_j = \yn_j$. Therefore, $\yn_i$, which under monotonic convergence is bounded between $y_i$ and $\yf_i$, also satisfies the inequality $\yn_i -\yn_j \le r_i $. Similarly, one can show that the neighboring relation is preserved in $G_r(\yn)$ for the case when $y_i - y_j > r_i$. 

So far, we have proved that $G_r(y) = G_r(\yn)$, hence condition~\ref{nbcond} holds for $\yn$. Due to monotonic convergence in one time step under opinion vector $y$, opinion order and direction of convergence toward final value is preserved in $\yn$, that is conditions~\ref{ordercond} and \ref{deltabndcond} are true for $\yn$.
To prove the last two conditions for $\yn$, we should find $k_i(\yn)$'s. 
Regarding part (a), if the two open-mindeds $i$ and $j$ have the same set of open-minded children, then equation~\eqref{kplus} tells us that $k_i(\yn) = k_j(\yn)$.
Regarding part (b), clearly, both conditions of Lemma~\ref{convex} hold for $G_r(y)$, hence for any open-minded $i$, $k_i(\yn)$ lies in the convex hull of $k_j(y)$'s, where $j$'s are its open minded children. This fact tells us that: $0 \le k_i(\yn) \le 1$, $k_{max_i}(\yn)\le k_{max_i}(y)$, and $k_{min_i}(\yn)\ge k_{min_i}(y)$.
Therefore, for any open-minded agents $i$ and $j$ with different sets of open-minded children,
\begin{multline*}
 k_{max_{i,j}}(\yn) - k_{min_{i,j}}(\yn) \le \min_{m,\alpha_1,\beta_1}\Big|1 - \frac{\alpha_1^m\Delta_j(y)}{\beta_1^m\Delta_i(y)}\Big|\\\times \min\{ 1- k_{max_{i,j}}(\yn), k_{min_{i,j}}(\yn) \} ,
\end{multline*}
where $\alpha_1 \in [k_{min_j}(y),k_{max_j}(y)]$, $\beta_1 \in [k_{min_i}(y),k_{max_i}(y)]$, and $m\in \mathbb{Z}_{\geq0}$. Knowing that $k_i(y) \in [k_{min_i}(y),k_{max_i}(y)]$,
\begin{gather*}
\min_{m,\alpha_1,\beta_1}\Big|1 - \frac{\alpha_1^m\Delta_j(y)}{\beta_1^m\Delta_i(y)}\Big| \le \min_{m,\alpha_1,\beta_1}\Big|1 - \frac{\alpha_1^m k_j(y)\Delta_j(y)}{\beta_1^mk_i(y)\Delta_i(y)}\Big|.
\end{gather*}
The right hand side of the above inequality is equal to
\begin{gather*}
\min_{m,\alpha_1,\beta_1}\Big|1 - \frac{\alpha_1^m \Delta_j(\yn)}{\beta_1^m\Delta_i(\yn)}\Big| \le \min_{m,\alpha_2,\beta_2}\Big|1 - \frac{\alpha_2^m \Delta_j(\yn)}{\beta_2^m\Delta_i(\yn)}\Big|,
\end{gather*}
where $\alpha_2$ and $\beta_2 $, respectively, belong to smaller intervals of $ [k_{min_j}(\yn) ,k_{max_j}(\yn)]$ and $[k_{min_i}(\yn),k_{max_i}(\yn)]$.
Hence, part (b) holds for $\yn$, which completes the proof.
\end{proof}

\end{document}